\documentclass[letterpaper, 10pt, conference]{ieeeconf}
\IEEEoverridecommandlockouts                              
\overrideIEEEmargins

\usepackage{graphics} 
\usepackage[pdftex]{graphicx}
\usepackage{epsfig} 
\usepackage{mathptmx} 
\usepackage{times} 
\usepackage{amsmath} 
\usepackage{amssymb}  
\usepackage{subfigure}
\usepackage{verbatim}
\usepackage{cite}
\usepackage{epstopdf}
\usepackage{color}
\usepackage{relsize}
\usepackage{nopageno}
\newtheorem{theorem}{Theorem}

\newtheorem{assumption}{Assumption}
\usepackage{color}

\newtheorem{corollary}{Corollary}

\newtheorem{lemma}{Lemma}

{ 

\newtheorem{example}{Example}
}

\newcounter{prob1}
\newcounter{prob2}
\newcounter{prob3}
\newcounter{prob4}
\newcounter{prob5}
\newcounter{prob6}
\newcounter{prob7}

\title{\LARGE \bf Control approach to  computing the feedback capacity for stationary finite dimensional Gaussian channels}


\author{ \parbox{5 in}{\centering Chong Li* and Nicola Elia
         \thanks{This work was supported by NSF under grant number ECS-0901846. Emails:chongl@qti.qualcomm.com; nelia@iastate.edu}\\
         *Qualcomm Research, NJ \\
         Department of Electrical and Computer Engineering, Iowa State University, IA\\}}

%


\begin{document}

\maketitle \thispagestyle{empty} \pagestyle{plain}
\begin{abstract}
We firstly extend the interpretation of feedback communication over stationary finite dimensional Gaussian channels as feedback control systems by showing that, the problem of finding stabilizing feedback controllers with maximal reliable transmission rate over Youla parameters coincides with the problem of finding strictly causal filters to achieve feedback capacity recently derived in \cite{Kim10}. The aforementioned interpretation provides an approach to construct deterministic feedback coding schemes (with double exponential decaying error probability).
 We next propose an asymptotic capacity-achieving upper bounds, which can be numerically evaluated by solving finite dimensional dual optimizations.
From the filters that achieve upper bounds, we derive feasible filters which lead to a sequence of lower bounds. Thus,  from the lower bound filters we obtain  communication systems that achieve the lower bound rate. Extensive examples show the sequence of lower bounds is asymptotic capacity-achieving as well.
\end{abstract}

\begin{keywords}
\normalfont Capacity, Gaussian, convex optimization, stationarity. \normalsize
\end{keywords}

\setcounter{prob1}{1}
\setcounter{prob2}{2}
\setcounter{prob3}{3}
\setcounter{prob4}{4}
\setcounter{prob5}{5}
\setcounter{prob6}{6}
\setcounter{prob7}{7}

\section{Introduction}

\indent We consider a discrete-time Gaussian channel with noiseless feedback.
The additive Gaussian channel is modeled as
\begin{equation}\label{model: forward channel}
Y_i=U_i+W_i, \qquad i=1,2,\cdots
\end{equation}
where the Gaussian noise $\lbrace W_i \rbrace_{i=1}^{\infty}$ is assumed to be stationary with power spectrum density $\mathbb{S}_{w}(e^{i\theta})>0$ for $\forall \theta\in [-\pi,\pi)$. Unless the contrary is explicitly stated, ``stationary'' without specification refers to stationary in wide sense. Let $\mathcal{RH}_2$ be the set of stable, strictly proper \footnote{As will be seen later, strictly proper introduces one step delay in the feedback processing of the information from $y_i$
} rational filters in Hardy space $\mathcal{H}_2$.
\begin{assumption}\label{LTIFD.ass}
In this paper, noise $W$ is assumed as the output of a finite-dimensional linear time invariance (LTI) stable system $\mathbb{H} \in \mathcal{RH}_2$, not necessarily minimum phase,  driven by white Gaussian noise with zero mean and unit variance.
The power spectral density of $W$ is given by  ${\mathbb S}_w(e^{j\theta})= |\mathbb{H}(e^{j\theta})|^2$ (i.e. canonical spectral factorization).
\end{assumption}

Note that any stationary process can be approximated with arbitrary accuracy by this LTI filtering model and this approximation is very ''efficient'', as it corresponds to the rational approximation of the spectral density \cite{book.stat.linear.sys}.

For a code of rate $R$, we specify a  $(n,2^{nR})$ channel code as follows. $M$ is a uniformly distributed message index where $M\in\lbrace 1,2,3,\cdots,2^{nR}\rbrace$. There exists an encoding process $U_i(M,Y^{i-1})$, where $Y^{i-1}=\lbrace Y_0, Y_1,\cdots, Y_{i-1}\rbrace)$, for $i = 1,2,\cdots,n$ and $U_1(M, Y^0) = U_1(M)$ with average transmit power constraint and a decoding function g:$Y^n\rightarrow \lbrace 1,2,\cdots,2^{nR}\rbrace$ with an error probability satisfying
$P_e^{(n)}=\frac{1}{2^{nR}}\sum_{M=1}^{2^{nR}} p(M\neq g(y^n)|M)\leq \epsilon_n$,
where $\lim_{n\rightarrow\infty}\epsilon_n=0$.  This coding process indicates that the channel input $U_i$ is determined by the message index $M$ and previous channel output $Y^{i-1}$. The objective of communication is to deliver $M$ to the receiver at highest code rate with arbitrarily small error probability. The feedback capacity $C_{fb}$ is defined as the supremium of all achievable rates $R$.
\indent 
%

As it is shown in \cite{Kim10}, if $\mathbb{S}_w$ has a canonical spectral factorization, the feedback capacity can be characterized by 
\begin{equation}
\begin{split}
C_{fb}=&\max_{\mathbb{Q}}\frac{1}{2\pi}\int_{-\pi}^{\pi}\log |1+\mathbb{Q}(e^{i\theta})|d\theta,\\
s.t. \quad  &\frac{1}{2\pi}\int_{-\pi}^{\pi}|\mathbb{Q}(e^{i\theta})|^2\mathbb{S}_w(e^{i\theta})d\theta\leq P,\\
& \mathbb{Q}(e^{i\theta}) \text{is strictly causal}, \;{ \mathbb{Q}\in \mathcal{RH}_2} .\\
\end{split}
\label{capacity_short01}
\end{equation}

While the above characterization is elegant, it is infinite dimensional, and  except for the \textit{first-order auto-regressive moving average} (ARMA) noise, finding the feedback capacity, either analytically or numerically, remains open.\\
\indent In this paper, we revisit and extend the interpretation of  feedback communication over Gaussian channels as feedback control problems, \cite{unified.theory}. In particular, we highlight the central role of \textit{Youla parameterization over all stabilizing controllers} in connecting these two theories and 
show that  the characterization of the maximum-rate over all stabilizing controllers and the feedback capacity over all coding schemes (\ref{capacity_short01}) coincide.
Moreover, our result provides the explicit (sub-)optimal communication scheme (i.e. encoder and decoder) directly from $\mathbb{Q}$. It is worth noting that \cite{Kim10} (Theorem 6.1 and Lemma 6.1) has shown an $k$-dimensional generalization of the Schalkwijk-Kailath coding scheme achieves the feedback capacity for any ARMA noise spectrum of order $k$. Alternatively, we herein provide a feedback coding scheme from $\mathbb{Q}$ by leveraging control-oriented derivations, which can be directly constructed and implemented.
We next provide an alternative characterization of the feedback capacity, from which an asymptotic capacity-achieving sequence of upper bounds is derived and can be numerically evaluated by solving finite dimensional optimizations. Furthermore, a sequence of lower bounds on the feedback capacity are obtained by constructing specific deterministic coding schemes with double exponential decaying error probability, which reveal a direct connection with Youla parameter besides being an generalized Schalkwijk-Kailath scheme studied by Elia \cite{Elia2004}, Kim\cite{Kim10}, Liu-Elia\cite{Liu_CIS}, Shayevitz-Feder\cite{posterior.matching} and others. The archived lower bound and the upper bounds provide a way to evaluate how close is the scheme to the feedback capacity.  From extensive examples, the sequence of lower bounds converges arbitrarily close to the capacity, yielding an asymptotic optimal feedback coding scheme.

\subsection{Related Work}
We review the literature along two avenues of information theory and feedback control theory. As a complete survey is vast and most of them are out of the scope of our discussion, we herein list most relevant results to this paper. In the field of information theory, the investigation on feedback Gaussian capacity has been experiencing a decade journey. \cite{Elias1956} and its sequel \cite{Elias1967} are recognized as the first work on feedback Gaussian channels by proposing feedback coding schemes. \cite{Schalkwijk66} \cite{Schalkwijk66_2} developed an elegant linear feedback coding scheme of achieving the capacity of additive white Gaussian noise (AWGN) channel with noiseless feedback. Thereafter, several work by Butman \cite{Butman69}, \cite{Butman76}, Tiernan\cite{Tiernan74}\cite{Tiernan76}, Wolfowitz\cite{Wolfowitz75} and Ozarow\cite{Ozarow_random90}\cite{Ozarow_upper90} extended this notable result to ARMA Gaussian channels, with objective to find channel capacity and optimal feedback codes. As a consequence, many interesting upper and lower bounds were obtained. Based on the insight/results from aforementioned literature, \cite{cover89} made a major breakthrough on characterizing the n-block capacity of arbitrary feedback Gaussian channels by using asymptotic equipartition property (AEP) theorem. It was also shown that feedback capacity for arbitrary Gaussian channels cannot be increased by factor two or half bit. This n-block capacity was extended to the case of feedback Gaussian channels with noisy feedback where capacity bounds and other interesting results were obtained\cite{chong11_ISIT}\cite{Chong11_allerton_upperbound,Chong12_allerton_sideInfo,chong.thesis}. As hinted by this n-block capacity characterization, \cite{Kim10} developed a variational characterization on the capacity of stationary feedback Gaussian channels, which is an infinite dimensional optimization problem. For first-order ARMA noise, this variational characterization yields a closed-form solution on the capacity and shows the optimality of the Schalkwijk-Kailath scheme.

In the field of feedback control theory, many control-based technical tools have been utilized to attack the problem of finding feedback Gaussian channel capacity and capacity-achieving codes. \cite{Elia2004} proposed the derivation of  feedback communication schemes based on a feedback control method.
These results were  obtained from considering the problem of stabilization of a given unstable plant over a Gaussian communication channel.  The communication rate (in the sense of Shannon) over the channel was connected to the degree on instability of the plant. The minimal transmission power for a given unstable plant was obtained by solving the classical ${\cal H}_2$ (or Linear Quadratic Gaussian) problem. However, plants with the same degree of instability may require different transmission power to be stabilized.  \cite{Elia2004} provided the plants that can be stabilized most efficiently, i.e. with the least transmission power for a given degree of instability for special case channels. This approach provides a method of finding feedback coding scheme for Gaussian channels. The approach has been further extended  to various channels \cite{Liu04_ISIT}, connected to the classical Linear Quadratic Gaussian (LQG) control problem \cite{Franceschetti_contrl_comm_fd}. Finally, \cite{Liu_CIS} extended the convergence of the fundamental limitations of control and communication to include the limitations of estimation. In light of this unified framework, a set of achievable rates of feedback Gaussian channels were obtained by constructing specific feedback coding schemes via control-oriented approaches.
\cite{Yang_feedbackCapacity} converted the problem of finding feedback Gaussian channel capacity into a form of stochastic control and used dynamic programming to compute the n-block capacity.

\section{Feedback Control Interpretation of Feedback Capacity for Gaussian Channels }\label{sec.interpretation}
\indent We recall that {\cite{Elia2004} provides feasible communication schemes for a given channel. However, in order to construct capacity-achieving feedback codes, two things are necessary a) the capacity must be known and b) an unstable controller, in the terminology of this paper, must be found. Both steps are not easy, although we know from \cite{Kim10} that linear scheme is capacity achieving.

In this section, we propose a modification of the approach of  {\cite{Elia2004}  that provides a derivation of the feedback capacity formula for finite dimensional LTI Gaussian channels, from control theory principles. The proposed approach based on \cite{Elia2004} provides feasible feedback communication schemes with guaranteed transmission rate. Extensive simulations show that this coding scheme can achieve (arbitrarily) close to the feedback capacity.



%

\begin{figure}
\begin{center}
\includegraphics[scale=0.7]{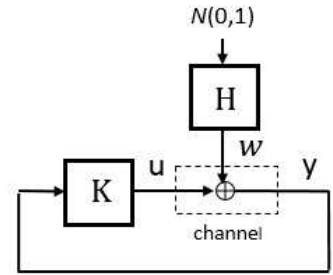}
\caption{Gaussian channels with feedback.}
\label{noisyfb_scheme}
\end{center}
\end{figure}
Given the channel/plant (see Fig. \ref{noisyfb_scheme})
\begin{equation}
y_i=u_i+w_i,
\label{channel.LTI}
\end{equation}
where $w_i$ satisfies Assumption \ref{LTIFD.ass}. As shown in Fig. \ref{noisyfb_scheme}, we are interested in the closed loop stabilization problem over the given channel. Note that the noiseless feedback communication scheme will measure $y$ and use it help produce $u$ with one step delay.

Any such closed loop system must  produce $u_i$ with the required transmission power, $P$. Any controller that produces bounded transmission power in the loop must be a stabilizing controller. Differently from  {\cite{Elia2004} where a specific plant was stabilized over the channel, here we consider the map from $u_i$ to $y_i$ as the plant, $\mathbb{G}$.
Since $\mathbb{G}$ is stable, all strictly causal finite dimensional LTI stabilizing controllers for such a plant have the following expression \cite{Doyle92}, represented as transfer functions:
\begin{equation}\label{youla.eq}
{\mathbb K}=-{\mathbb Q}(I+{\mathbb G}{\mathbb Q})^{-1}=-\mathbb{Q}(I+\mathbb{Q})^{-1}
\end{equation}
where $\mathbb{Q}$ is any finite-dimensional LTI strictly causal stable system, i.e.
$\mathbb{Q}\in \mathcal{RH}_2$ strictly causal.

The above parametrization is known as the Youla parametrization of stabilizing controllers. Working with ${\mathbb{Q}}$ instead of the set of stabilizing controllers ${\mathbb K}$ is more convenient. The main advantage comes from the fact that the above transformation convexifies the set of achievable closed loop maps by a stabilizing controller.
In particular, consider the complementary sensitivity map $\mathbb{T}$ from $w$ to $u$ and sensitivity map $\mathbb{S}$ from $w$ to $y$, respectively.
Simple feedback operations lead to
$$
\mathbb{T}(\mathbb{K})=\mathbb{K}(1+\mathbb{G}\mathbb{K})^{-1}$$
while
$$
\mathbb{S}(\mathbb{K})=(1+\mathbb{G}\mathbb{K})^{-1}
$$

Substituting (\ref{youla.eq}) into $\mathbb{S}$ and $\mathbb{T}$, it follows that
\begin{equation}
\begin{split}
\mathbb{S}(\mathbb{K})=&(I-\mathbb{G}{\mathbb Q}(I+{\mathbb G}{\mathbb Q})^{-1})^{-1}\\
=&I+{\mathbb G}{\mathbb Q}=I+\mathbb{Q}\\
\end{split}
\label{S_function}
\end{equation}
and
\begin{equation}
\begin{split}
\mathbb{T}(\mathbb{K})=&-{\mathbb Q}(I+{\mathbb G}{\mathbb Q})^{-1}S\\
=&-{\mathbb Q}(I+{\mathbb G}{\mathbb Q})^{-1}(I+{\mathbb G}{\mathbb Q})\\
=&-\mathbb{Q}\\
\end{split}
\label{T_function}
\end{equation}
It was shown in \cite{Elia2004} that the Bode Integral formula of $\mathbb{S}$, the sensitivity function is tightly connected to the Directed Information\footnote{Directed information, firstly defined by Massey \cite{Massey1990}, has been vastly used in characterizing the capacity of channels with feedback \cite{Yang_finteState}\cite{kim08} \cite{Tati09}\cite{chong_isit11_capacity}. Moreover, it has interpretation on portfolio theory, data compression and hypothesis testing \cite{Permuter11}} rate of the channel, which measures the reliable transmission rate through the channel, see also \cite{Kramer_thesis,Tati09}\footnote{As shown in \cite{Tati09}, the feedback capacity of arbitrary channels can be obtained by maximizing the rate of directed information of the closed-loop system}.
Specifically, the rate
\begin{equation}
\begin{split}
R=&\lim_{n\rightarrow \infty}\frac{1}{n}I(U^n \rightarrow Y^n)\\
& = \frac{1}{4\pi}\int_{-\pi}^\pi \log |\mathbb{S}(e^{i\theta})|^2d \theta\\
& = \frac{1}{4\pi}\int_{-\pi}^\pi \log |(1+\mathbb{G}\mathbb{K})^{-1}|^2d \theta\\
& = \frac{1}{4\pi}\int_{-\pi}^\pi \log |I+{\mathbb Q}(e^{i\theta})|^2d \theta\\
\end{split}
\label{equ.eigen.rate}
\end{equation}
where $I(U^n \rightarrow Y^n)$ denotes the directed information from random sequence $U^n = \lbrace U_i\rbrace_{i=1}^{n}$ to random sequence $Y^n = \lbrace Y_i\rbrace_{i=1}^{n}$. The detailed derivation of the above equalities can be found in Theorem 4.6 \cite{Elia2004}. According to Poisson-Jensen's formula, since $\mathbb{S}$ is stable, this integral only depends on the zeros of $\mathbb{S}$ outside the unit disc,  i.e., non-minimum phase (NMP) zeros. Given the relation between $\mathbb{S}$ and (\ref{youla.eq}), i.e., $\mathbb K = -{\mathbb Q}\mathbb{S}^{-1}$, this implies that $\mathbb{K}$ must be unstable if $\mathbb{Q}$ do not have NMP zeros that cancel those of $I+\mathbb{Q}$ (which is always true). This fact is later verified by numerical examples.

The average power of channel input $u$ required for stabilization, under the current  assumptions, is given by
\begin{equation}
\begin{split}
\lim_{n\rightarrow \infty} \frac{1}{n}\sum_{i=1}^{n}u_i = &\frac{1}{2\pi}\int_{-\pi}^\pi |{\mathbb S}_u(e^{i\theta})|^2d\theta\\
=&\frac{1}{2\pi}\int_{-\pi}^\pi |\mathbb{T}(e^{i\theta})\mathbb{H}(e^{i\theta})|^2d\theta\\
=&\frac{1}{2\pi}\int_{-\pi}^\pi |-\mathbb{Q}(e^{i\theta})\mathbb{H}(e^{i\theta})|^2d\theta\\
=&\frac{1}{2\pi}\int_{-\pi}^\pi |\mathbb{Q}(e^{i\theta})|^2{\mathbb S}_w(e^{i\theta})d\theta.\\
\end{split}
\end{equation}
Note that the first equality follows from Parseval's theorem. Thus, we can search over strictly causal $\mathbb{Q}\in \mathcal{RH}_2$ to maximize the Bode Integral Formula over the average power for stabilization constraint.
Therefore, the largest achievable rate of all strictly causal LTI stabilizing controllers is given by
\begin{equation}
\begin{split}
R_{max}=&\max_{\mathbb{Q}}\frac{1}{4\pi}\int_{-\pi}^{\pi}\log |1+\mathbb{Q}(e^{i\theta})|^2 d\theta,\\
s.t. \quad  &\frac{1}{2\pi}\int_{-\pi}^{\pi}|\mathbb{Q}(e^{i\theta})|^2\mathbb{S}_w(e^{i\theta})d\theta\leq P,\\
& \mathbb{Q}(e^{i\theta}) \quad \text{is strictly causal},\;{\mathbb{Q}\in {\mathcal{RH}_2}}.\\
\end{split}
\label{youla.cap.eq}
\end{equation}

Notice that the above optimization, derived from Youla parameterization, is identical to (\ref{capacity_short01}) derived from information theory.
In summary,
\begin{enumerate}
\item the above derivation extends the feedback control interpretation of communication system over Gaussian channels with access to feedback and shows how the Youla parameter $\mathbb{Q}$ is central to the feedback capacity problem;
\item as will be seen next, feasible feedback coding schemes can be explicitly constructed from controller ${\mathbb K}$ with guaranteed transmission rate (in the sense of Shannon);
\end{enumerate}

\begin{figure*}
\begin{center}
\includegraphics[scale=0.4]{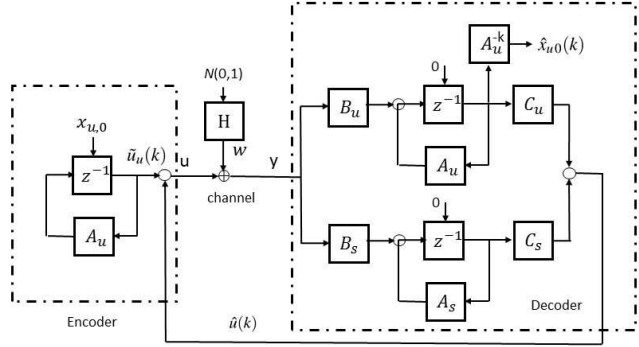}
\caption{Decomposition of controller $\mathbb{K}$ into feedback encoder and decoder.}
\label{fig:codingStructure}
\end{center}
\end{figure*}

\subsection{Feasible  Coding Schemes}
Once a feasible $\mathbb{Q}$ is found for the above optimization, (possibly optimal or arbitrarily close to optimal) we can obtain a system
$\mathbb{K}=\mathbb{Q}(I+\mathbb{Q})^{-1}$ stabilizing the channel within the prescribed input average power limit.
We show  how to construct a feasible feedback coding scheme, which is deterministic (time-invariant) and has double exponential decaying decoding error probability and computable transmission rate. We follow \cite{Elia2004}.

First of all, we present controller $\mathbb{K}$ as an LTI single-input-single-output (SISO) finite-dimensional discrete-time unstable system with the following state-space model:
\begin{equation}
\begin{split}
\mathbb{K}: \qquad \begin{bmatrix} x_s(k+1) \\ x_u(k+1)\end{bmatrix} &= \begin{bmatrix} A_s & 0 \\ 0 & A_u\end{bmatrix} \begin{bmatrix} x_s(k) \\ x_u(k)\end{bmatrix} + \begin{bmatrix} B_s \\ B_u\end{bmatrix}y(k)\\
u(k) &= \begin{bmatrix} C_s & C_u\end{bmatrix} \begin{bmatrix} x_s(k) \\ x_u(k)\end{bmatrix}\\
\end{split}
\label{codingScheme_SS}
\end{equation}
The eigenvalues of $A_u$ are outside the unit disc while the eigenvalues of $A_s$ are all strictly inside the unit disc (strictly stable). Without loss of generality we assume that $A_s$ and $A_u$ are in Jordan form. Assume $A_u$ has $m$ eigenvalues, denoted by $\lambda_i(A_u), i = 1,2,\cdots, m$ .

Starting with the decoder, we decompose ${\mathbb K}$ as follows. We present the simplest solution here, others are possible.\\
{\em Decoder}\\
The decoder runs ${\mathbb K}$ driven by $y$.
$$
\begin{array}{ccl}
x_s(k+1)&=&A_sx_s(k)+B_sy(k),\; x_{s}(0)=0\\
\hat{x}_u(k+1)&=&A_u\hat{x}_u(k)+B_uy(k),\; \hat{x}_{u}(0)=0
\end{array}
$$
It produces two signals:
an estimate of the initial condition of the encoder
$$
\hat{x}_{u\,0}(k)=A_u^{-k}\hat{x}_u(k).
$$
and a feedback signal
$$
\hat{u}(k) = \begin{bmatrix} C_s & C_u\end{bmatrix} \begin{bmatrix} x_s(k) \\ \hat{x}_u(k)\end{bmatrix}\\
$$
{\em Encoder}\\
The encoder runs the following dynamics
$$
\begin{array}{rcl}
\tilde{x}_u(k+1)&=&A_u\tilde{x}_u(k),\;\tilde{x}_u(0)=x_{u\,0},\\
\tilde{u}_u(k)&=&C_u\tilde{x}_u(k)
\end{array}
$$
it receives $\hat{u}$ and produces the channel input
$$
u(k)=\tilde{u}_u(k)+\hat{u}(k)
$$
Since the closed loop is stable,  $x_u(k)=\tilde{x}_u(k)+\hat{x}_u(k)$ goes to zero with time if the noise is not present. This implies that
$\hat{x}_u(k)\to -\tilde{x}_u(k)$. Thus,
$-\hat{x}_{u\,0}(k)$ is an  estimate at time $k$ of $\tilde{x}_u(0)=x_u(0)=x_{u,0}$. This coding scheme is illustrated in Fig.\ref{fig:codingStructure}.

Next theorem describes how fast messages associated with each $x_{u,0}$ is transferred to $-\hat{x}_{u\,0}(k)$
in the presence of the channel noise.
\begin{theorem}
Consider stationary Gaussian channels in (\ref{channel.LTI}). Given a filter $\mathbb{Q}(e^{i\theta})\in \mathcal{RH}_2$, the coding scheme described above based on the decomposition of $\mathbb{K} = -\mathbb{Q}(I+\mathbb{Q})^{-1}$ achieves a reliable transmission rate (in the sense of Shannon) at $\frac{1}{2\pi}\int_{-\pi}^{\pi}\log |1+\mathbb{Q}(e^{i\theta})|d\theta =\sum_{i=1}^{m} \log|\lambda_i(A_u)|$  \textit{bits/channel use} and has double exponential decaying error probability.
\label{thm_capacity_achieving_code}
\end{theorem}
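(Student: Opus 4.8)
The plan is to split the statement into two logically separate claims and prove each in turn: first the \emph{rate identity} $\frac{1}{2\pi}\int_{-\pi}^{\pi}\log|1+\mathbb{Q}(e^{i\theta})|\,d\theta=\sum_{i=1}^{m}\log|\lambda_i(A_u)|$, which is a pure transfer-function fact, and second the \emph{operational claim} that the encoder/decoder of the preceding subsection recovers the message at this rate with doubly exponential error decay. For the identity I would simply formalize the Poisson--Jensen argument already sketched above: since $\mathbb{Q}$ is strictly causal and stable, $\mathbb{S}=1+\mathbb{Q}\in\mathcal{RH}_2$ is analytic on and outside the unit disk with $\mathbb{S}(\infty)=1$, so Jensen's formula gives $\frac{1}{2\pi}\int_{-\pi}^{\pi}\log|\mathbb{S}(e^{i\theta})|\,d\theta=\log|\mathbb{S}(\infty)|+\sum_{|z|>1}\log|z|$, the sum taken over zeros of $\mathbb{S}=1+\mathbb{Q}$ outside the unit disk. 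Because $\mathbb{K}=-\mathbb{Q}\mathbb{S}^{-1}$, these non-minimum-phase zeros are exactly the unstable poles of $\mathbb{K}$, i.e. the eigenvalues of $A_u$ (no NMP cancellation occurs, as noted), and $\log|\mathbb{S}(\infty)|=0$; the identity follows.

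For achievability I would first derive a clean closed form for the estimation error. Writing $\tilde{x}_u(k)=A_u^{k}x_{u,0}$ for the encoder and setting $x_u(k):=\tilde{x}_u(k)+\hat{x}_u(k)$, one checks directly that $x_u$ satisfies the unstable dynamics of $\mathbb{K}$ driven by $y$ with initial condition $x_{u,0}$, and that the channel input equals $C_s x_s+C_u x_u$, so the overall loop is precisely the closed loop of $\mathbb{G}$ with $\mathbb{K}$. Consequently the decoder estimate obeys
\[
-\hat{x}_{u\,0}(k)=x_{u,0}-A_u^{-k}x_u(k),\qquad e(k):=x_{u,0}-\bigl(-\hat{x}_{u\,0}(k)\bigr)=A_u^{-k}x_u(k).
\]
Closed-loop stability (guaranteed because $\mathbb{K}$ stabilizes $\mathbb{G}$, which also makes the channel input power finite as computed earlier) forces $x_u(k)$ to be a zero-mean Gaussian process with bounded, convergent covariance $\Sigma_{x_u}(k)\to\Sigma_{x_u}$. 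Since every eigenvalue of $A_u$ lies strictly outside the unit disk, $\Sigma_e(k)=A_u^{-k}\Sigma_{x_u}(k)(A_u^{-k})^{*}\to 0$ exponentially, with $\det\Sigma_e(k)^{1/2}$ decaying like $|\det A_u|^{-k}=2^{-k\sum_i\log|\lambda_i(A_u)|}$.

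I would then convert this into a rate/error statement in the Schalkwijk--Kailath spirit. Fix $R<\sum_i\log|\lambda_i(A_u)|$, encode $M\in\{1,\dots,2^{nR}\}$ as one of $2^{nR}$ points $x_{u,0}$ in a fixed bounded region of the $m$-dimensional unstable subspace, and decode by nearest neighbor on $-\hat{x}_{u\,0}(n)$. The number of reliably distinguishable points in that region scales as the inverse error-ellipsoid volume $|\det A_u|^{n}=2^{n\sum_i\log|\lambda_i(A_u)|}$, so any $R$ strictly below $\sum_i\log|\lambda_i(A_u)|$ is feasible and the rate $\sum_i\log|\lambda_i(A_u)|$ is approached as $R$ increases. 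The gap $\epsilon:=\sum_i\log|\lambda_i(A_u)|-R$ makes the ratio of minimum constellation distance to error standard deviation grow like $2^{n\epsilon/m}$, so the Gaussian tail bound yields $P_e^{(n)}\le c_1\exp(-c_2\,2^{2n\epsilon/m})$, which is doubly exponential in $n$.

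The main obstacle is the counting-and-tail step, not the two preceding derivations. The error covariance $\Sigma_e(n)$ is in general non-white and coupled across Jordan blocks, so the delicate point is showing that a single fixed constellation simultaneously (i) packs $\sim|\det A_u|^{n}$ points with controlled minimum distance, aligned so that the shrinking directions of $\Sigma_e(n)$ do not destroy separation, and (ii) delivers vanishing, doubly exponential error under the \emph{exact} $\Sigma_e(n)$ rather than a scalar-variance surrogate. I expect to handle (i)--(ii) by bounding $\Sigma_e(n)$ between scalar multiples of a reference ellipsoid via the spectral decomposition of $A_u$ and uniform bounds on $\Sigma_{x_u}(k)$, reducing the multidimensional tail to the scalar Gaussian estimate above; the rate identity and the error recursion are comparatively routine.
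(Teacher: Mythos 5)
Your proposal is correct and follows essentially the same route as the paper, which omits the proof and defers to the rate formula (\ref{equ.eigen.rate}) (Poisson--Jensen/Bode integral relating $\frac{1}{2\pi}\int\log|1+\mathbb{Q}|\,d\theta$ to the NMP zeros of $I+\mathbb{Q}$, equivalently the eigenvalues of $A_u$) together with Theorem 4.3 of \cite{Elia2004} (the hypercube constellation on the unstable subspace, nearest-neighbor decoding, and the doubly exponential Gaussian tail). The one point you flag as delicate --- packing against the exact, non-white $\Sigma_e(n)$ --- is handled in the cited argument exactly as the paper's remark suggests, by allocating per-coordinate sub-rates $R_i<\log|\lambda_i(A_u)|$ in the Jordan coordinates of $A_u$ rather than splitting the gap uniformly, which is consistent with your proposed fix.
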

The proof is omitted as it directly follows from (\ref{equ.eigen.rate}) and Theorem 4.3 in \cite{Elia2004}. We remark that the above transmission rate is achieved by allocating the message index $x_{u,0}$ at the centroid of small hypercubes generated by an unit hypercube in the coordinate system depending on $A_u$. We refer interested readers to Theorem 4.3 in \cite{Elia2004} for details.

In summary, the above discussion provides an approach to construct feasible feedback coding schemes over stationary finite dimensional Gaussian channels, by leveraging Youla parameter $\mathbb{Q}$. However, we need to solve (\ref{youla.cap.eq}) which is an infinite dimensional non-convex optimization problem. In the rest of the paper, we provide an approach to find an asymptotic capacity-achieving capacity upper bounds and the resulting filter $\mathbb{Q}$, based on which feedback codes can be constructed as described in this section.

\section{Upper Bounds on Capacity}
In this section, we firstly present an alternative characterization of Gaussian feedback capacity by leveraging the inverse Fourier transform. Based on this characterization, a sequence of asymptotic capacity-achieving upper bounds is proposed and evaluated by solving finite dimensional optimization problems.
\subsection{Alternative characterization of Gaussian Feedback Capacity}
\indent We focus on the optimization problem (\ref{youla.cap.eq}) or (\ref{capacity_short01}). In what follows, we characterize the Gaussian feedback capacity by imposing the causality constraints in terms of the inverse Fourier transform. 
\begin{lemma}\label{lemma_symmetric_filter}
Under Assumption \ref{LTIFD.ass},
there exists an optimal solution $\hat{\mathbb{Q}}(e^{i\theta})$ for (\ref{capacity_short01}) with $\hat{\mathbb{Q}}(e^{i\theta}) = \hat{\mathbb{Q}}^*(e^{-i\theta})$ where $\hat{\mathbb{Q}}^*$ indicates the complex conjugate. Furthermore, the feedback capacity can be characterized by
\begin{equation}
\begin{split}
C_{fb}=&\max_{\Gamma}\frac{1}{4\pi}\int_{-\pi}^{\pi}\log ((1+a(\theta))^2+ b(\theta)^2 )d\theta\\
s.t. \quad &\frac{1}{2\pi}\int_{-\pi}^{\pi}\left(a^2(\theta)+b^2(\theta)\right) S_w(\theta)d\theta\leq P,\\
&\text{and strictly causal filter frequency domain constraints}\\
&\int_{-\pi}^{\pi} a(\theta)\cos(n\theta) d\theta + \int_{-\pi}^{\pi} b(\theta) \sin(n\theta)d\theta = 0\\
& \quad n = 0, 1,2,\cdots,\infty \\
\end{split}
\label{formula_stationaryGuassian_upperbound_equi}
\end{equation}
where the maximum is taken over a functional set $\Gamma$ defined as
\begin{equation}
\begin{split}
\Gamma =& \lbrace a(\theta), b(\theta): [-\pi, \pi] \rightarrow \mathbb{R} |  a(\theta), b(\theta)\in \mathcal{L}_2\rbrace.\\
\end{split}
\label{def_set_Gamma}
\end{equation}
\label{lemma:symmetric_filter}
\end{lemma}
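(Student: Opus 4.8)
The plan is to prove the two assertions in turn: first that an optimal filter may be taken conjugate-symmetric, and then that, under this symmetry, the requirement ``$\mathbb{Q}\in\mathcal{RH}_2$ strictly causal'' in (\ref{capacity_short01}) is equivalent to the stated family of linear Fourier constraints, after which the rewriting of the objective and the power constraint into (\ref{formula_stationaryGuassian_upperbound_equi}) is purely algebraic.

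For the symmetry, I would use that the noise model in Assumption \ref{LTIFD.ass} has real coefficients, so $\mathbb{S}_w(e^{i\theta})=|\mathbb{H}(e^{i\theta})|^2$ is even in $\theta$. Introduce the reflection $(R\mathbb{Q})(e^{i\theta}):=\mathbb{Q}^*(e^{-i\theta})$. A change of variable $\theta\mapsto-\theta$ together with evenness of $\mathbb{S}_w$ shows that $R$ preserves both the objective $\frac{1}{2\pi}\int\log|1+\mathbb{Q}|\,d\theta$ and the power $\frac{1}{2\pi}\int|\mathbb{Q}|^2\mathbb{S}_w\,d\theta$; moreover $R$ merely conjugates the impulse response, hence maps strictly causal filters to strictly causal filters and $\mathcal{RH}_2$ into itself. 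Since $\mathcal{RH}_2$ consists of real-rational filters, every feasible $\mathbb{Q}$ already satisfies $R\mathbb{Q}=\mathbb{Q}$, i.e. $\mathbb{Q}(e^{i\theta})=\mathbb{Q}^*(e^{-i\theta})$, so any optimizer $\hat{\mathbb{Q}}$ of (\ref{capacity_short01}) is conjugate-symmetric; equivalently, $R$-invariance of the feasible set lets one restrict the optimization to the conjugate-symmetric subspace without loss.

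Next I would decompose $\mathbb{Q}(e^{i\theta})=a(\theta)+i\,b(\theta)$ with $a=\mathrm{Re}\,\mathbb{Q}$, $b=\mathrm{Im}\,\mathbb{Q}$ in $\mathcal{L}_2$; conjugate symmetry is exactly the statement that $a$ is even and $b$ is odd. Then $|1+\mathbb{Q}|^2=(1+a)^2+b^2$ and $|\mathbb{Q}|^2=a^2+b^2$ yield the objective and power of (\ref{formula_stationaryGuassian_upperbound_equi}) at once. The heart of the argument is the causality constraint. Strict causality (strict properness) of $\mathbb{Q}$ is equivalent to the vanishing of all nonpositive-index Fourier coefficients $q_{-n}=\frac{1}{2\pi}\int_{-\pi}^{\pi}\mathbb{Q}(e^{i\theta})e^{-in\theta}\,d\theta=0$ for $n=0,1,2,\dots$. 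Expanding, $q_{-n}=\frac{1}{2\pi}\int\big[(a\cos n\theta+b\sin n\theta)+i(b\cos n\theta-a\sin n\theta)\big]\,d\theta$; with $a$ even and $b$ odd the imaginary integrand is odd and integrates to zero automatically, so $q_{-n}=0$ collapses to precisely $\int a(\theta)\cos(n\theta)\,d\theta+\int b(\theta)\sin(n\theta)\,d\theta=0$, the constraints listed in (\ref{formula_stationaryGuassian_upperbound_equi}). Conversely, these identities together with the even/odd symmetry force $q_{-n}=0$ for all $n\ge 0$, i.e. $\mathbb{Q}$ is strictly causal in $\mathcal{H}_2$.

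Finally I would close the equivalence of optimal values. The direction $\ge$ is immediate, since each feasible real-rational $\mathbb{Q}$ produces an admissible pair $(a,b)\in\Gamma$ with identical objective and power. For $\le$, any feasible $(a,b)$ reconstitutes a strictly causal $\mathbb{Q}\in\mathcal{H}_2$; using that $\mathbb{S}_w$ is bounded above and below (a positive finite-dimensional spectrum) so the objective and power functionals are continuous, and that $\mathcal{RH}_2$ is dense in the strictly causal subspace of $\mathcal{H}_2$, I would approximate $\mathbb{Q}$ by rational feasible filters whose objective approaches that of $(a,b)$, giving $\mathrm{opt}_{a,b}\le C_{fb}$. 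The main obstacle I anticipate is the symmetry step: it is painless only because $\mathcal{RH}_2$ is real-rational, where conjugate symmetry is automatic. If one instead relaxes to complex $\mathcal{H}_2$, note that $\log|1+\mathbb{Q}|$ is harmonic and hence neither convex nor concave in $\mathbb{Q}$, so one cannot symmetrize by naive averaging of $\mathbb{Q}$ and must argue via the $R$-invariance above; crucially, it is this symmetry that makes the single listed family of real constraints (rather than both the real and the imaginary parts of $q_{-n}$) capture strict causality, so $\Gamma$ must be read as the conjugate-symmetric functions, with $a$ even and $b$ odd.
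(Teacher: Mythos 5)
The paper states this lemma without proof, offering only the one-line remark that strict causality is imposed on the non-positive-index coefficients of the inverse Fourier transform of $\mathbb{Q}$; your argument is the natural completion of exactly that idea and is essentially sound. The conjugate symmetry $\hat{\mathbb{Q}}(e^{i\theta})=\hat{\mathbb{Q}}^*(e^{-i\theta})$ is indeed automatic for real-rational filters, the identities $|1+\mathbb{Q}|^2=(1+a)^2+b^2$ and $|\mathbb{Q}|^2=a^2+b^2$ give the objective and power constraint immediately, and your computation showing that $q_{-n}=0$ collapses to $\int a(\theta)\cos(n\theta)\,d\theta+\int b(\theta)\sin(n\theta)\,d\theta=0$ when $a$ is even and $b$ is odd is correct. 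Your closing observation is in fact a substantive clarification of the statement itself: as written, $\Gamma$ in (\ref{def_set_Gamma}) admits arbitrary pairs $(a,b)\in\mathcal{L}_2$, for which the listed constraints only annihilate $\mathrm{Re}(q_{-n})$ and hence do not force strict causality (for instance $a\equiv 0$, $b\equiv c$ satisfies every constraint in (\ref{formula_stationaryGuassian_upperbound_equi}) yet corresponds to $\mathbb{Q}=ic$ with $q_0\neq 0$); the equality with $C_{fb}$ therefore requires reading $\Gamma$ as the conjugate-symmetric pairs, exactly as you say, and no naive averaging repairs the unrestricted version since $\log|1+\mathbb{Q}|$ is neither convex nor concave in $\mathbb{Q}$. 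Two minor loose ends remain, shared with the paper: attainment of the maximum (the lemma asserts an optimizer exists, which needs a weak-compactness and upper-semicontinuity argument in $\mathcal{L}_2$ rather than mere invariance of the feasible set), and the density step passing from strictly causal $\mathcal{L}_2$ filters back to $\mathcal{RH}_2$, where continuity of $\int\log|1+\mathbb{Q}|\,d\theta$ can fail near zeros of $1+\mathbb{Q}$ and a small perturbation or Fatou-type argument is needed to make the approximation rigorous.
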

Here $S_w(\theta)$ refers to $\mathbb{S}_w(e^{i\theta})$ for simplicity.
The basic idea of this characterization is that the strict causality can be imposed on the non-positive index coefficients of the inverse Fourier transform of $\mathbb{Q}(e^{i\theta})$.

\subsection{Upper bounds}
We next obtain upper bounds to $C_{fb}$ by considering only a finite number of causality constraints. The h-upper-bound, denoted by $C_{fb}(h)$, is defined as
follows:
\begin{equation}\label{cfbh.eq}
\begin{split}
C_{fb}(h)=&\sup_{\Gamma}\frac{1}{4\pi}\int_{-\pi}^{\pi}\log ((1+a(\theta))^2+ b(\theta)^2 )d\theta\\
s.t. \quad &\frac{1}{2\pi}\int_{-\pi}^{\pi}\left(a^2(\theta)+b^2(\theta)\right) S_w(\theta)d\theta\leq P,\\
&\int_{-\pi}^{\pi} a(\theta)\cos(n\theta) d\theta + \int_{-\pi}^{\pi} b(\theta) \sin(n\theta)d\theta = 0\\
& \quad n = 0, 1,2,\cdots, h.\\
\end{split}
\end{equation}
Since $C_{fb}(h) \geq C_{fb}(h+1)$ for any $h\geq 0$, and the sequence $C_{fb}(h)$ is bounded from below, we have $C_{fb} = \lim_{h\rightarrow \infty} C_{fb}(h)$ being approached from above.

 Note $C_{fb}(h)$ is still a semi-infinite dimensional non convex optimization problem.
The next theorem shows that the optimal solution to
$C_{fb}(h)$ exits in $\mathcal{L}_2$. Moreover it characterizes the Lagrangian dual problem and shows that there is no duality gap between the primal and dual problems.
In what follows we focus on non flat channels as the feedback capacity of flat channels is solved.
\begin{theorem}(\textit{main result})\label{strongdual.thm}
Under Assumption \ref{LTIFD.ass}, further assume that $\mathbb{S}_w$ is non-flat (e.g. $\mathbb{S}_w(e^{i\theta})$ is not constant over $\theta$). Let
$$
\begin{array}{l}
A(\theta) =[\cos(\theta),\cos(2\theta), \cdots,\cos(h\theta)]',\\
B(\theta) =[\sin(\theta),\sin(2\theta), \cdots,\sin(h\theta)]'.
\end{array}
$$
For $\lambda\geq 0$, $\eta\in R^h$, and $\eta_0\in R$, define
$$
r^2(\theta) =(2\lambda S_{w}(\theta)+\eta'A(\theta)+\eta_0)^2+(\eta'B(\theta))^2.
$$
Then, \\
a) The Lagrangian dual of problem $C_{fb}(h)$ in  (\ref{cfbh.eq}) is given by the following optimization:
\begin{equation}\label{dual.eq}
(D): \mu_h = - \max_{\lambda\geq 0,\eta\in \mathbb{R}^{h},\eta_0\in R}g(\lambda,\eta,\eta_0)
\end{equation}
where
\begin{equation}\label{thm.dual.eq}
\begin{split}
&g(\lambda,\eta,\eta_0)\\
=&\displaystyle\frac{1}{2\pi}\int_{-\pi}^\pi \left[\frac{1}{2}\log(2\lambda S_{w}(\theta)-\nu(\theta))-\frac{r^2(\theta)}{2\nu(\theta)}+\lambda S_{w}(\theta)\right]d\theta\\
&-\lambda P+\eta_0+\frac{1}{2}.\\
\end{split}
\end{equation}
with $
\nu(\theta)=\frac{-r^2(\theta)+\sqrt{r^4(\theta)+8\lambda S_{w}(\theta)r^2(\theta)}}{2}$.\\
b) (D) is equivalent to the following convex optimization problem
\begin{equation}\label{dual2.eq}
\mu_h = - \max_{\begin{array}{l}\lambda\geq 0,\eta\in \mathbb{R}^{h},\eta_0\in R\\ \nu(\theta)\geq 0 \in C^\infty_{[-\pi,\pi]}\end{array}}
\tilde{g}(\lambda,\eta,\eta_0,\nu(\theta))
\end{equation}
where
\begin{equation}\label{thm.dual2.eq}
\begin{split}
&\tilde{g}(\lambda,\eta,\eta_0,\nu(\theta))\\
=&\displaystyle\frac{1}{2\pi}\int_{-\pi}^\pi \left[\frac{1}{2}\log(2\lambda S_{w}(\theta)-\nu(\theta))-\frac{r^2(\theta)}{2\nu(\theta)}+\lambda S_{w}(\theta)\right]d\theta\\
&-\lambda P+\eta_0+\frac{1}{2}.\\
\end{split}
\end{equation}
c) Furthermore, $C_{fb}(h)=\mu_h$, and the optimal $\mathbb{Q}_h(e^{i\theta}) = a(\theta)+ib(\theta)$ for $C_{fb}(h)$ is obtained from the optimal solution of (\ref{dual.eq}) or (\ref{dual2.eq}) as follows:
\begin{eqnarray}\label{thm_xy_dual}
a(\theta)&=&\frac{2\lambda S_{w}(\theta)+\eta'A(\theta)+\eta_0}{\nu(\theta)}-1 \quad a.e.\\
b(\theta)&=&\frac{\eta'B(\theta)}{\nu(\theta)}  \quad a.e.
\end{eqnarray}
\label{lemma.strong.dual}
\end{theorem}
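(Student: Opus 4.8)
The plan is to attack (\ref{cfbh.eq}) by Lagrangian duality, but with a key caveat in mind: the objective $\frac{1}{4\pi}\int\log((1+a)^2+b^2)\,d\theta$ is \emph{not} concave in $(a,b)$ (its Hessian is indefinite), so the primal is nonconvex and Slater's condition cannot by itself deliver part (c) — the point $a=b=0$ is strictly feasible but guarantees only weak duality. I would therefore prove the parts as follows: (a) compute the dual in closed form by pointwise-in-$\theta$ maximization; (b) recast it as a genuinely convex program by lifting $\nu(\theta)$ to a free variable; and (c) close the gap not via convexity but by \emph{reconstructing} a feasible primal point from the dual optimizer whose objective equals $\mu_h$, sandwiching weak duality into equality. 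First I would form $L = \frac{1}{2\pi}\int_{-\pi}^{\pi}[\tfrac12\log\rho - \lambda S_w(a^2+b^2) + a(\eta'A+\eta_0) + b\,\eta'B]\,d\theta + \lambda P$, with $\rho=(1+a)^2+b^2$, dualizing the power budget with $\lambda\ge 0$ and the $h+1$ causality constraints with $(\eta_0,\eta)$. Since the multipliers enter the integrand linearly, $L$ decouples across $\theta$ and the dual function $G=\sup_{a,b}L$ is a pointwise supremum.

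For part (a), at each $\theta$ I would pass to $q=a+ib$ and $w=1+q$ and observe that the linear terms rotate to $\mathrm{Re}((2\lambda S_w+\eta'A+\eta_0 - i\,\eta'B)w)$, whose angular dependence is maximized by aligning $w$ with the vector $(2\lambda S_w+\eta'A+\eta_0,\ \eta'B)$. This collapses the planar problem to the radial problem $\max_{\tau>0}[\log(\tau r)-\lambda S_w r^2\tau^2 + r^2\tau]$ with $r$ as in the statement, which is \emph{strictly concave} in $\tau$ and hence has a unique maximizer $\tau=1/\nu$. Its first-order condition is exactly $\nu^2+r^2\nu = 2\lambda S_w r^2$, i.e. the quadratic whose positive root is the stated $\nu(\theta)$; the positive root is forced because it is the one for which the full planar Hessian is negative definite ($\rho=1/(2\lambda S_w-\nu)$ requires $\nu>0$). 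Substituting $\tau=1/\nu$ and using the quadratic to eliminate $\lambda S_w$ reduces the supremum to the integrand of (\ref{thm.dual.eq}), the constant $\eta_0$ surviving because $\frac{1}{2\pi}\int(\eta'A+\eta_0)\,d\theta=\eta_0$. With the sign convention $g=-G$, minimizing the dual becomes $-\max g$, giving (\ref{dual.eq}).

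For part (b), I would introduce $\nu(\theta)\ge 0$ as an independent variable and note that $\tfrac12\log(2\lambda S_w-\nu)$ is concave (log of an affine function) while $-r^2/(2\nu)$ is concave because $r^2(\theta)$ is a sum of squares of affine functions of $(\lambda,\eta,\eta_0)$ and $u^2/\nu$ is jointly convex on $\nu>0$ (perspective / quadratic-over-linear convexity). Hence $\tilde g$ is jointly concave and (\ref{dual2.eq}) is a convex program; maximizing its integrand over $\nu(\theta)$ pointwise gives $\frac{r^2}{2\nu^2}=\frac{1}{2(2\lambda S_w-\nu)}$, again the defining quadratic, so the inner maximization returns $g$ and the two dual formulations coincide. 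Concavity together with coercivity in $\lambda$ (finiteness of the supremum already forces $\lambda>0$, since $\lambda=0$ makes the pointwise sup $+\infty$) yields a dual optimizer $(\lambda^*,\eta^*,\eta_0^*)$ with $\lambda^*>0$.

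For part (c), the reconstruction (\ref{thm_xy_dual}) is precisely the pointwise maximizer found in (a), so $L(a^*,b^*;\lambda^*,\eta^*,\eta_0^*)=G(\lambda^*,\eta^*,\eta_0^*)=\mu_h$. I would then read off the KKT conditions by the envelope (Danskin) theorem: $\partial_{\eta_n}G=\frac{1}{2\pi}\int[a^*\cos n\theta + b^*\sin n\theta]\,d\theta$ and $\partial_{\eta_0}G=\frac{1}{2\pi}\int a^*\,d\theta$ vanish at the optimum, which is exactly the statement that $(a^*,b^*)$ satisfies all $h+1$ causality constraints, while $\partial_\lambda G = P-\frac{1}{2\pi}\int S_w(a^{*2}+b^{*2})\,d\theta=0$ (since $\lambda^*>0$) gives complementary slackness. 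The multiplier terms in $L$ then drop and $L(a^*,b^*;\cdot)=J(a^*,b^*)$, so the feasible $(a^*,b^*)$ attains the value $\mu_h$; with weak duality $C_{fb}(h)\le\mu_h$ this forces $C_{fb}(h)=\mu_h$ and identifies $(a^*,b^*)$ as optimal. I would finish by verifying $(a^*,b^*)\in\mathcal{L}_2$: as $S_w$ is rational, continuous and bounded below by a positive constant on the compact torus and $r(\theta)$ is a bounded trigonometric-polynomial expression, $\nu$ is continuous with $a^*,b^*$ bounded, the only delicate points being the isolated zeros of $r$, where $|1+a^*|=|2\lambda^* S_w+\eta'A+\eta_0|/\nu$ and $|b^*|$ stay finite because $\nu\sim\sqrt{2\lambda^* S_w}\,r$ there. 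The \emph{main obstacle} is this part: proving no gap for a nonconvex primal. The crux is not the algebra but guaranteeing the sandwich closes — that a dual optimizer \emph{exists} and the envelope/stationarity conditions are valid in this infinite-dimensional, semi-infinitely-constrained setting, together with the non-degeneracy $\lambda^*>0$. This is exactly where the non-flatness of $S_w$ is invoked, to exclude degenerate optima (constant spectra reducing to the already-solved flat case) and to ensure the reconstructed filter is a bona fide $\mathcal{L}_2$ element rather than a boundary object.
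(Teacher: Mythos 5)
The paper as given states Theorem~\ref{strongdual.thm} without any proof (none appears in the source), so there is no in-paper argument to compare against; your proposal must stand on its own, and for the most part it does. I checked your part (a) algebra independently: the rotation to $w=1+a+ib$, the alignment of $w$ with $(2\lambda S_w+\eta'A+\eta_0,\ \eta'B)$, the radial first-order condition $\nu^2+r^2\nu=2\lambda S_w r^2$, the identity $|w|^2=1/(2\lambda S_w-\nu)$, and the elimination of $\lambda S_w r^2/\nu^2=\tfrac12+r^2/(2\nu)$ do reproduce the integrand of (\ref{thm.dual.eq}), the constant $\eta_0+\tfrac12$, and the reconstruction (\ref{thm_xy_dual}) exactly. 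Part (b) (perspective/quadratic-over-linear convexity of $r^2/\nu$ plus concavity of $\log$ of an affine function, with the pointwise inner maximization over $\nu$ returning $g$) is also correct and is evidently the reasoning the authors rely on when they later assert convexity of the discretized problem (\ref{opt_upperbound_approximate}).

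The genuine gap is the one you flag yourself in part (c), and it should not be waved through: the sandwich argument needs (i) \emph{attainment} of the dual supremum in (\ref{dual.eq}) --- you establish $\lambda^*>0$ via the blow-up at $\lambda=0$, but you give no coercivity or compactness argument in $(\eta,\eta_0)$, and without attainment there is no dual optimizer at which to evaluate (\ref{thm_xy_dual}); and (ii) differentiability of the dual value function so that Danskin/envelope stationarity translates into primal feasibility and complementary slackness of the reconstructed $(a^*,b^*)$ --- this requires uniqueness of the pointwise inner maximizer for a.e.\ $\theta$ (which fails exactly on the zero set of $r$, so you must argue that set has measure zero, e.g.\ because $2\lambda^* S_w+\eta'A+\eta_0$ is a nonzero rational/trigonometric function) together with a domination argument to differentiate under the integral. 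Your appeal to non-flatness of $S_w$ as the ingredient that rescues both points is a plausible guess but is asserted, not demonstrated; as written, part (c) is a program rather than a proof. Parts (a) and (b) I would accept as complete.
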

%
\subsection{Computing $C_{fb}(h)$}
\indent Although the dual problem of $C_{fb}(h)$ can be casted into a convex
optimization (\ref{dual.eq}) with finite number of variables. The problem it is not easily computable since the cost is an  integral, not explicitly computable in terms of the variables.
A natural practical approach would be to approximate the integral with a finite sum by discretizing $\theta$. We apply such discretization to (\ref{dual2.eq}) (with spacing $\frac{\pi}{m}$) and introduce the following finite dimensional convex problem from (D). Given $m$,  consider
\begin{equation}{\label{opt_upperbound_approximate}}
C_{fb}(m,h)=-\max_{\lambda\geq 0,\eta,\eta_0, \nu_i}g_m(\lambda,\eta,\eta_0,\nu_i)
\end{equation}
where
\begin{equation*}
\begin{split}
&g_m(\lambda,\eta,\eta_0,\nu_i)\\
=& \frac{1}{2m}\sum_{i=1}^{2m}\frac{1}{2}\log(2\lambda S_{w}(\theta_i)-\nu_i)+\frac{1}{2}-\frac{r^2(\theta_i)}{2\nu_i}+\lambda S_{w}(\theta_i)\\
&-\lambda P+\eta_0,\\
\end{split}
\end{equation*}
and $\theta_i = -\pi+ \frac{\pi}{m}(i-1)$. \\
Clearly, $C_{fb}(m,h)$ is only an approximation of $C_{fb}(h)$, although the approximation gets arbitrarily close to $C_{fb}(h)$ and its solution as $m\to \infty$;  since the integration is over a compact set, and the variables are either finite dimensional or continuous.

Notice that the optimization (\ref{opt_upperbound_approximate}) is in a convex form.
The $\log$ of an affine function is concave. $\frac{r^2(\theta_i)}{\nu_i}$ is a quadratic (composed with an affine function of the variables)  over linear function, therefore convex.  Thus,  (\ref{opt_upperbound_approximate}) can be efficiently  solved  with standard convex optimization tools, e.g. CVX.


Based on the solution to (\ref{opt_upperbound_approximate}), we can actually obtain a guaranteed upper bound on $C_{fb}(h)$ for each $m$ using the upper bound property of dual feasible solutions.
Let $\lambda ^m, \eta^m, \eta_0^m, \nu_i^m$ be the optimal solution to  (\ref{opt_upperbound_approximate}). Here we are omitting the dependence on $h$ for simplicity. Let
\begin{equation}{\label{C_h_m}}
\overline{C_{fb}(m,h)} = - g(\lambda^m,\eta^m,\eta_0^m)
\end{equation}
where $g(\cdot)$ is defined in (\ref{thm.dual.eq}). Clearly,  $\overline{C_{fb}(m,h)}$ is easily computable to  arbitrary accuracy.



\begin{corollary}
Given $h\geq 0$, $\overline{C_{fb}(m,h)} \geq C_{fb}(h) \geq C_{fb}$ for $\forall m > 0 $ and $\lim_{m\rightarrow\infty}\overline{C_{fb}(m,h)} = C_{fb}(h)$.
\end{corollary}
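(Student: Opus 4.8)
The plan is to establish the chain $\overline{C_{fb}(m,h)} \ge C_{fb}(h) \ge C_{fb}$ by treating the three assertions separately and leaning on the strong duality of Theorem~\ref{strongdual.thm}. The rightmost inequality $C_{fb}(h)\ge C_{fb}$ is immediate: problem (\ref{cfbh.eq}) is a relaxation of (\ref{formula_stationaryGuassian_upperbound_equi}) in which all but the first $h+1$ causality constraints (indices $n=0,1,\dots,h$) have been dropped, so the supremum can only increase; this is exactly the monotonicity already recorded below (\ref{cfbh.eq}). The real content is the left inequality and the limit, both of which I would derive from the fact that, by Theorem~\ref{strongdual.thm}, $C_{fb}(h)=\mu_h=-\max_{\lambda\ge 0,\eta,\eta_0} g(\lambda,\eta,\eta_0)$ with no duality gap.

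For $\overline{C_{fb}(m,h)}\ge C_{fb}(h)$ I would invoke dual feasibility. The discretization in (\ref{opt_upperbound_approximate}) merely replaces the integral by a Riemann sum; it leaves the dual feasible set $\{\lambda\ge 0,\ \eta\in\mathbb{R}^{h},\ \eta_0\in\mathbb{R}\}$ untouched. Hence the optimizer $(\lambda^m,\eta^m,\eta_0^m)$ of (\ref{opt_upperbound_approximate}) is a feasible point of the \emph{exact} dual $(D)$. Since $\overline{C_{fb}(m,h)}$ in (\ref{C_h_m}) re-evaluates the \emph{true} (integral) dual objective $g$ of (\ref{thm.dual.eq}) at this point, with $\nu(\theta)$ taken in its closed form rather than the discrete $\nu_i^m$, weak duality gives $g(\lambda^m,\eta^m,\eta_0^m)\le \max_{\lambda\ge 0,\eta,\eta_0} g$, and therefore $\overline{C_{fb}(m,h)}=-g(\lambda^m,\eta^m,\eta_0^m)\ge -\max g=\mu_h=C_{fb}(h)$, valid for every $m>0$. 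One technical point to check here is that $\lambda^m>0$ at the optimum (the power constraint is active on a non-flat channel, by complementary slackness), so that the closed-form $\nu(\theta)$ stays strictly positive and $g$ is finite.

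For the limit I would argue by sandwiching. From the previous paragraph $\overline{C_{fb}(m,h)}\ge C_{fb}(h)$ for all $m$, so it suffices to show $\limsup_m \overline{C_{fb}(m,h)}\le C_{fb}(h)$. As $m\to\infty$ the Riemann sum $g_m$ converges to the integral $g$, uniformly on compact subsets of the parameter space, so the finite-dimensional values satisfy $C_{fb}(m,h)\to \mu_h=C_{fb}(h)$ and any convergent subsequence of $(\lambda^m,\eta^m,\eta_0^m)$ has a limit $(\lambda^\star,\eta^\star,\eta_0^\star)$ optimal for $(D)$. Continuity of $g$ then yields $\overline{C_{fb}(m,h)}=-g(\lambda^m,\eta^m,\eta_0^m)\to -g(\lambda^\star,\eta^\star,\eta_0^\star)=\mu_h=C_{fb}(h)$ along that subsequence, and combining with the uniform lower bound forces the whole sequence to converge.

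I expect the main obstacle to be the convergence-of-optimizers step: one must confine the discrete minimizers $(\lambda^m,\eta^m,\eta_0^m)$ to a fixed compact set, i.e. establish coercivity of $g$ by controlling its behavior as $\lambda\downarrow 0$ or $\|\eta\|\to\infty$, where $\nu(\theta)\downarrow 0$ and the $\log$ and $r^2(\theta)/\nu(\theta)$ terms can blow up, and one must upgrade pointwise Riemann-sum convergence to convergence uniform enough to pass the limit through the optimization. By contrast, the two inequalities themselves are essentially bookkeeping once weak duality and the no-gap conclusion of Theorem~\ref{strongdual.thm} are in hand.
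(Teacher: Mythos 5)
Your argument for the key inequality $\overline{C_{fb}(m,h)}\ge C_{fb}(h)$ is exactly the paper's proof: the discrete optimizer $(\lambda^m,\eta^m,\eta_0^m)$ is feasible for the exact dual $(D)$, and weak duality (together with the no-gap claim $\mu_h=C_{fb}(h)$ from Theorem~\ref{strongdual.thm}) makes $-g(\lambda^m,\eta^m,\eta_0^m)$ an upper bound on $C_{fb}(h)$; the relaxation argument for $C_{fb}(h)\ge C_{fb}$ likewise matches the remark following (\ref{cfbh.eq}). Your sandwiching sketch for $\lim_{m\to\infty}\overline{C_{fb}(m,h)}=C_{fb}(h)$ goes beyond the paper, which does not prove this limit in the corollary at all but only asserts the Riemann-sum convergence informally earlier in the section, so the compactness/coercivity issues you flag are genuine open details rather than deviations from the paper's argument.
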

\begin{proof}
From the solution of (\ref{opt_upperbound_approximate}),  $\lambda ^m, \eta^m, \eta_0^m$, $\nu_i^m$, we know that  $\lambda ^m, \eta^m, \eta_0^m$ are feasible  for (\ref{dual.eq}). However,  any feasible dual solution provides a cost $- g(\lambda^m,\eta^m,\eta_0^m)$ which is an upper bound on $C_{fb}(h)$.
\end{proof}
Extensive numerical simulations show that for sufficiently large $m$ the upper bound $\overline{C_{fb}(m,h)}$, and therefore $C_{fb}(h)$  are close to the capacity even  for small $h$.

\section{Lower Bounds on Capacity}
In the previous section we have introduced a finite dimensional convex optimization, (\ref{opt_upperbound_approximate}). From its optimal solutions we were able to obtain convergent upper bounds on $C_{fb}$. In this section, we show that from the solution to (\ref{opt_upperbound_approximate}) we can obtain lower bounds on $C_{fb}$.

The primal problem associated with  (\ref{opt_upperbound_approximate}) is the following optimization. This can be verified analogously to Theorem \ref{strongdual.thm}}.
\begin{equation}\label{cfbhfd.eq}
\begin{split}
C_{fb}(m,h)=&\min_{a_i\in R,b_i\in R}\frac{1}{4m}\sum_{i=1}^{2m}\log ((1+a_i)^2+ b_i^2 )\\
s.t. \quad &\frac{1}{2m}\sum_{i=1}^{2m}\left(a^2_i+b^2_i\right) S_w(\theta_i) \leq P,\\
&\sum_{i=1}^{2m}a_i\cos(n\theta_i) + \sum_{i=1}^{2m} b_i \sin(n\theta_i) = 0\\
& \quad n = 0, 1,2,\cdots, h.\\
&\theta_i = -\pi+ \frac{\pi}{m}(i-1)
\end{split}
\end{equation}
This problem is non convex, however, from the solution to (\ref{opt_upperbound_approximate}),  we can construct
\begin{eqnarray}\label{thm_xy_dual_approx}
a_i&=&\frac{2\lambda S_{w}(\theta_i)+\eta'A(\theta_i)+\eta_0}{\nu_i}-1 \label{a.eq}\\
b_i&=&\frac{\eta'B(\theta_i)}{\nu_i}\label{b.eq}
\end{eqnarray}
which are feasible for (\ref{cfbhfd.eq}).


Note that (\ref{a.eq}) and (\ref{b.eq}) can be interpreted as the coefficients of a sampled spectrum $\tilde{\mathbb{Q}}_{m,h}$.  As such, they are associated with a periodic impulse response signal $\tilde{q}_{m,h}(k)=\displaystyle\sum_{n=-\infty}^\infty c_n \delta(k-n)$, with fundamental period $2m$, where
\begin{equation}
\begin{array}{l}
c_n =\displaystyle\frac{1}{2m}\sum_{i=1}^{2m}a_i\cos(n\theta_i)-b_i\sin(n\theta_i)\\
\theta_i = -\pi+ \frac{\pi}{m}(i-1).
\end{array}
\label{alg:L2_filter}
\end{equation}
Consider one $2m$-period truncation of $\tilde{q}_{m,h}(k)$, denoted by $q_{m,h}(k)$. We next perform a causal projection on $q_{m,h}(k)$  by zeroing all the non strictly causal coefficients. Finally, we extend the signal defined from $-m+1$ to $m$ to have zero value outside the interval $[-m+1,m]$. Let $q^c_{m,h}(k)$  denote such strictly causal Finite Impulse Response signal and let its Fourier Transform be $\mathbb{Q}^c_{m,h}$.
$\mathbb{Q}^c_{m,h}$ will satisfy the power constraint
$$
\frac{1}{2\pi}\int_{-\pi}^{\pi}|\mathbb{Q}_{m,h}^c(e^{i\theta})|^2\mathbb{S}_w(e^{i\theta})d\theta\leq P
$$
for $m$ large enough.  However  if for some $m$ it does not, then we can scale  $\mathbb{Q}^c_{m,h}$ appropriately so that the scaled  $\mathbb{Q}_{m,h}^c(e^{i\theta})$ does.
The end result of this procedure is that $\mathbb{Q}_{m,h}^c(e^{i\theta})$ is feasible for
$C_{fb}$ in (\ref{capacity_short01})(\ref{youla.cap.eq})(\ref{formula_stationaryGuassian_upperbound_equi}).

Specifically:
\begin{enumerate}
\item \textit{Filter Construction}: given $m,h>0$, solve (\ref{opt_upperbound_approximate}) and obtain solution $(\lambda_{m,h}^*, \eta_{m,h}^*,\eta_{0,m,h}^*)$. Obtain  $a_i,b_i$, $i=1,\ldots,2m$ from
(\ref{thm_xy_dual_approx}) with $(\lambda_{m,h}^*, \eta_{m,h}^*,\eta_{0,m,h}^*)$.
\item \textit{One Period Truncation and Causal Projection}: compute the strictly casual part of one period of the impulse response by computing the coefficients $c_n$ $(n=1,\ldots m)$ by
\begin{equation}
\begin{array}{l}
c_n =\displaystyle\frac{1}{2m}\sum_{i=1}^{2m}a_i\cos(n\theta_i)-b_i\sin(n\theta_i)\\
\theta_i = -\pi+ \frac{\pi}{m}(i-1).
\end{array}
\label{alg:L2_filter}
\end{equation}
Then, construct a strictly causal filter $\mathbb{Q}_{m,h}^{c}(z)=\displaystyle\sum_{n=1}^m c_n z^{-n}$.
Lower order rational approximations may be obtained by Hankel model reduction methods, e.g. \cite{Kung_svd78}, if desired.

\item \textit{Power Scale}: Let $p:=\frac{1}{2\pi}\int_{-\pi}^{\pi}|\mathbb{Q}^{c}_{h,m}(e^{i\theta})|^2\mathbb{S}_w(e^{i\theta})d\theta$. $p$ can be computed by computing the ${\cal H}_2$ norm squared for $\mathbb{Q}^c_{h,m}(z)H(z)$. This can be done in state-space by computing the solution to a Lyapunov equation.  We next rescale $\mathbb{Q}^c_{h,m}$ by $\alpha_{m,h}=\sqrt{P/p}$ and obtain $\mathbb{Q}_{m,h}^{cp}=\alpha_{m,h}\mathbb{Q}_{m,h}^{c}$ $(\alpha_{m,h}>0)$,  such that the power budget is satisfied.
\item \textit{Coding Scheme Construction}: $\mathbb{Q}_{m,h}^{cp}$  is a feasible solution to (\ref{formula_stationaryGuassian_upperbound_equi}). Thus  we can construct a feedback coding scheme from $\mathbb{K} = -\mathbb{Q}_{m,h}^{cp}(I+\mathbb{Q}_{m,h}^{cp})^{-1}$ (and its corresponding state-space representation).
\end{enumerate}
Theorem \ref{thm_capacity_achieving_code} shows this coding scheme $\mathbb{K}$ achieves a rate $R(h,m)= \frac{1}{2\pi}\int_{-\pi}^{\pi}\log |1+{\mathbb{Q}}_{h,m}^{cp}(e^{i\theta})|d\theta$ which is a lower bound on the capacity, and has double exponential decaying error probability.

Combined with the upper bound $\overline{C_{fb}(h,m)}$, a numerical capacity gap can be evaluated by $\overline{C_{fb}(h,m)} - R(h,m)$.
\section{Examples}
In summary, we can compute the upper bounds $C(h)$ with desired accuracy by solving the finite dimensional convex optimization (\ref{opt_upperbound_approximate}) with sufficiently large $m$.  We then construct a strictly causal filter $\mathbb{Q}$ for optimization (\ref{youla.cap.eq}) from the solution of (\ref{opt_upperbound_approximate}). From $\mathbb{Q}$ we obtain $\mathbb{K}$ and a feasible feedback communication scheme as described in Section \ref{sec.interpretation}.A.
The transmission rate can be computed from the NMP zeros of the sensitivity function $S=I+\mathbb{Q}$ as in Theorem \ref{thm_capacity_achieving_code}.
We next present some examples.
\subsection{Examples}
\begin{example}
Consider a first-order moving average (i.e. MV(1)) Gaussian process $W_i = U_i + 0.1U_{i-1}$ where $U_i$ is a white Gaussian process with zero mean and unit variance. The power spectral density is $\mathbb{S}_w(e^{i\theta}) = |1+ 0.1 e^{-i\theta}|^2$. Given power constraint $P = 10$, our proposed upper
and lower bounds converge to $1.7688$ \textit{bits/channel use}, being consistent with that computed from the closed form solution (44) in \cite{Kim10}.
\end{example}
\begin{example}
Consider the following second order moving average Gaussian process with
$$
W(z)=1+0.1z^{-1}+0.5z^{-2}
$$
with associated
$$
S_w(z)=|W(e^{i\theta})|^2.
$$
While neither the value of capacity or the optimal codes is known for this generalized Gaussian noise, both of them can be efficiently obtained from our approach.
With power constraint $P = 10$, the capacity is evaluated as $1.9194$ \textit{bits/channel use} (rounded to $4$ decimals).
This value is obtained with $h=6$ and $m=40$.
See Table \ref{table:convergence} for the convergence of upper and lower bounds as $h$ increases.
It is shown that the gap is vanishing quickly and the capacity is evaluated with high accuracy.

The optimal coding scheme $\mathbb{K}$  after order reduction via Hankel Singular Value Decomposition (H-SVD) on the finite impulse response is given by
$$
{\mathbb K}=-\frac{0.22026 (z+13.84) z^2}{(z^2 + 0.01755z + 0.03498) (z^2 + 0.4115z + 3.783)},
$$
which is unstable as expected. Note however that it has two complex conjugate unstable poles at
$$
p_{1,2}=-0.2057 \pm i 1.9340
$$
which would not be easy to find using the approach of \cite{Elia2004}. Also it can be verified that the achievable rate of $\mathbb{K}$ is $\log(|p_1||p_2|)=1.9194$ \textit{bits/channel use}.

The corresponding optimal closed loop Sensitivity function is
$$
I+{\mathbb Q}=\frac{(z^2 + 0.01755z + 0.03498) (z^2 + 0.4115z + 3.783)}{(z^2 + 0.1088z + 0.2644) (z^2 + 0.1z + 0.5)}
$$
As expected, the Sensitivity has the corresponding non-minimum phase zeros at the location of the unstable poles of ${\mathbb K}$. Note also that the optimal closed loop system includes dynamics that (partially) cancel the noise dynamics. The term $( z^2 + 0.1z + 0.5)$ is the numerator of $W(z)$. This feature is to be consistent with other examples where channels are modeled as minimum phase finite impulse response (FIR) filters.
Finally, Figure \ref{impulse.fig} shows the optimal impulse response of $\mathbb{Q}$, which is strictly causal as required.

\begin{figure}
\begin{center}
\includegraphics[scale=.5]{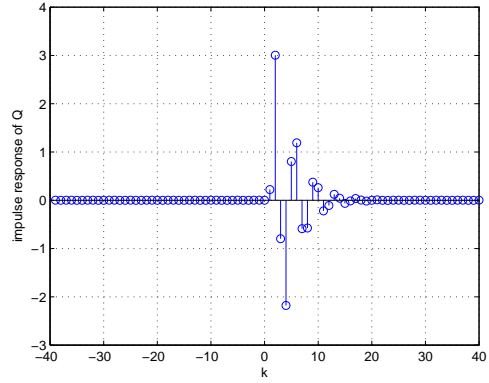}
\caption{Impulse response of ${\mathbb Q}$. }
\label{impulse.fig}
\end{center}
\end{figure}
\end{example}


\begin{figure*}
\begin{center}
\begin{tabular}{|c|c|c|c|}
\hline
& upper bound& lower bound & gap \\ \hline
h = 1 & 1.953615794213734 & 1.837997383645331& 0.115618410568404\\ \hline
h = 2 &1.919419110833023 &1.919133474756371& 2.856360766521071 $\times 10^{-4}$ \\ \hline
h = 3 &1.919395054344304 & 1.919215947145071& 1.791071992334192 $\times 10^{-4}$ \\ \hline
h = 4 &1.919358863350398 &1.919358573743238& 2.896071606972583$\times 10^{-7}$ \\ \hline
h = 5 &1.919358787261653 &1.919358689375164& 9.788648980268988$\times 10^{-8}$ \\ \hline
h = 6 & 1.919358744798872&1.919358744265310& 5.335623054492089$\times 10^{-10}$ \\ \hline
\end{tabular}
\caption{Convergence of upper and lower bounds. }
\label{table:convergence}
\end{center}
\end{figure*}

\section{Conclusion}
This paper studied the problem of computing the feedback capacity of stationary finite dimensional Gaussian channels. Firstly, the interpretation of feedback communication as feedback control over Gaussian channels was extended by leveraging \textit{Youla} parameterization. This new interpretation provides an approach to construct feasible feedback coding schemes with double exponentially decaying error probability. We next derived an asymptotic capacity-achieving upper bounds, which can be numerically computed by solving finite dimensional optimization. From the resulting filters that achieve upper bounds, feasible feedback coding schemes were constructed.
The convergence upper bound and the achievable lower bound allow to evaluate the closeness of the achievable rate to the feedback capacity.
From extensive examples, the sequence of lower bounds converges arbitrarily close to the capacity, yielding an asymptotic optimal feedback coding scheme.
We leave the lower bound convergence proof to further investigations.

\bibliographystyle{IEEEtran}
\bibliography{ref}

\begin{thebibliography}{10}
\providecommand{\url}[1]{#1}
\csname url@samestyle\endcsname
\providecommand{\newblock}{\relax}
\providecommand{\bibinfo}[2]{#2}
\providecommand{\BIBentrySTDinterwordspacing}{\spaceskip=0pt\relax}
\providecommand{\BIBentryALTinterwordstretchfactor}{4}
\providecommand{\BIBentryALTinterwordspacing}{\spaceskip=\fontdimen2\font plus
\BIBentryALTinterwordstretchfactor\fontdimen3\font minus
  \fontdimen4\font\relax}
\providecommand{\BIBforeignlanguage}[2]{{%
\expandafter\ifx\csname l@#1\endcsname\relax
\typeout{** WARNING: IEEEtran.bst: No hyphenation pattern has been}%
\typeout{** loaded for the language `#1'. Using the pattern for}%
\typeout{** the default language instead.}%
\else
\language=\csname l@#1\endcsname
\fi
#2}}
\providecommand{\BIBdecl}{\relax}
\BIBdecl

\bibitem{Kim10}
Y.~H. Kim, ``Feedback capacity of stationary gaussian channels,'' \emph{IEEE
  Transactions on Information Theory}, vol.~56, no.~1, pp. 57--85, 2010.

\bibitem{book.stat.linear.sys}
E.~J. Hannan and M.~Deistler, \emph{The Statistical Theory of Linear Systems
  (Classics in Applied Mathematics)}.\hskip 1em plus 0.5em minus 0.4em\relax
  SIAM-Society for Industrial and Applied Mathematics, 2012.

\bibitem{unified.theory}
S.Mitter, ``Towards a unified theory of communication and control,''
  \emph{Seminar, LIDS-MIT}, 2001.

\bibitem{Elia2004}
N.~Elia, ``When bode meets shannon: control-oriented feedback communication
  schemes,'' \emph{IEEE Transactions on Automatic Control}, vol.~49, no.~9, pp.
  1477--1488, 2004.

\bibitem{Liu_CIS}
J.~L. Liu and N.~Elia, ``Convergence of fundamental limitations in feedback
  communication, estimation, and feedback control over gaussian channels,''
  \emph{Communications in Information and Systems (CIS)}, vol.~14, no.~3, pp.
  161--211, 2014.

\bibitem{posterior.matching}
O.~Shayevitz and M.~Feder, ``Optimal feedback communication via posterior
  matching,'' \emph{IEEE Transactions on Information Theory}, vol.~57, no.~3,
  pp. 1186 -- 1222, 2011.

\bibitem{Elias1956}
P.~Elias, ``Channel capacity without coding,'' \emph{MIT Research Lab. of
  Electronics, Cambridge, MA, Quarterly Progr. Rep.}, 1956.

\bibitem{Elias1967}
------, ``Networks of gaussian channels with applications to feedback
  systems,'' \emph{IEEE Transactions on Information Theory}, vol.~13, pp. 493
  -- 501, 1967.

\bibitem{Schalkwijk66}
J.~P.~M. Schalkwijk and T.~Kailath, ``A coding scheme for additive noise
  channels with feedback i: No bandwidth constraint,'' \emph{IEEE Transactions
  on Information Theory}, vol. IT-12, no.~2, pp. 172--182, 1966.

\bibitem{Schalkwijk66_2}
J.~P.~M. Schalkwijk, ``A coding scheme for additive noise channels with
  feedback—ii: Band-limited signals,'' \emph{IEEE Transactions on Information
  Theory}, vol. IT-12, no.~2, pp. 183--189, 1966.

\bibitem{Butman69}
S.~Butman, ``A general formulation of linear feedback communication systems
  with solutions,'' \emph{IEEE Transactions on Information Theory}, vol.~15,
  no.~3, pp. 392 -- 400, 1969.

\bibitem{Butman76}
------, ``Linear feedback rate bounds for regressive channels,'' \emph{IEEE
  Transactions on Information Theory}, vol.~22, no.~3, pp. 363 -- 366, 1976.

\bibitem{Tiernan74}
J.~C. Tiernan and J.~P.~M. Schalkwijk, ``An upper bound to the capacity of the
  band-limited gaussian autoregressive channel with noiseless feedback,''
  \emph{IEEE Transactions on Information Theory}, vol.~20, pp. 311 -- 316,
  1974.

\bibitem{Tiernan76}
J.~C. Tiernan, ``Analysis of the optimum linear system for the autoregressive
  forward channel with noiseless feedback,'' \emph{IEEE Transactions on
  Information Theory}, vol.~22, pp. 359 -- 363, 1976.

\bibitem{Wolfowitz75}
J.~Wolfowitz, ``Signalling over a gaussian channel with feedback and
  autoregressive noise,'' \emph{J. Appl. Probability}, vol.~12, no.~4, pp. 713
  -- 723, 1975.

\bibitem{Ozarow_random90}
L.~H. Ozarow, ``Random coding for additive gaussian channels with feedback,''
  \emph{IEEE Transactions on Information Theory}, vol.~36, no.~1, pp. 17 -- 22,
  1990.

\bibitem{Ozarow_upper90}
------, ``Upper bounds on the capacity of gaussian channels with feedback,''
  \emph{IEEE Transactions on Information Theory}, vol.~36, no.~1, pp. 151 --
  161, 1990.

\bibitem{cover89}
T.~M. Cover and S.~Pombra, ``Gaussian feedback capacity,'' \emph{IEEE
  Transactions on Information Theory}, vol.~35, no.~1, pp. 37--43, 1989.

\bibitem{chong11_ISIT}
C.~Li and N.~Elia, ``Bounds on the achievable rate of noisy feedback gaussian
  channels under linear feedback coding scheme,'' \emph{IEEE International
  Symposium on Information Theory}, pp. 169--173, 2011.

\bibitem{Chong11_allerton_upperbound}
------, ``Upper bound on the capacity of gaussian channels with noisy
  feedback,'' \emph{49th Annual Allerton Conference}, pp. 84--89, 2011.

\bibitem{Chong12_allerton_sideInfo}
------, ``Noisy feedback communications with side information at the decoder,''
  \emph{50th Annual Allerton Conference}, pp. 1856--1863, 2012.

\bibitem{chong.thesis}
C.~Li, ``Fundamental limitations on communication channels with noisy feedback:
  information flow, capacity and bounds,'' \emph{Iowa State University Ph.D.
  dissertation}, 2013.

\bibitem{Liu04_ISIT}
J.~L. Liu, N.~Elia, and S.~Tatikonda, ``Capacity-achieving feedback scheme for
  markov channels with channel state information,'' \emph{IEEE International
  Symposium on Information Theory}, pp. 71--75, 2004.

\bibitem{Franceschetti_contrl_comm_fd}
E.~Ardestanizadeh and M.~Franceschetti, ``Control-theoretic approach to
  communication with feedback,'' \emph{IEEE Transactions on Automatic Control},
  vol.~57, no.~10, pp. 2576--2587, 2012.

\bibitem{Yang_feedbackCapacity}
S.~Yang, ``On the feedback capacity of power-constrained gaussian noise
  channels with memory,'' \emph{IEEE Transactions on Information Theory},
  vol.~53, no.~3, pp. 929 -- 954, 2007.

\bibitem{Doyle92}
J.~C. Doyle, B.~A. Francis, and A.~R. Tannenbaum, \emph{Feedback Control
  Theory}.\hskip 1em plus 0.5em minus 0.4em\relax New York: Macmillan, 1992.

\bibitem{Massey1990}
J.~L. Massey, ``Causality, feedback and directed information,'' \emph{In Proc.
  Int. Symp. Inf. Theory Applic.}, pp. 303--305, 1990.

\bibitem{Yang_finteState}
S.~Yang, A.~Kavci\'c, and S.~Tatikonda, ``Feedback capacity of finite state
  machine channels,'' \emph{IEEE Transactions on Information Theory}, vol.~51,
  no.~3, pp. 799 -- 810, 2005.

\bibitem{kim08}
Y.~H. Kim, ``A coding theorem for a class of stationary channels with
  feedback,'' \emph{IEEE Transactions on Information Theory}, vol.~54, no.~4,
  pp. 1488--1499, 2011.

\bibitem{Tati09}
S.~Tatikonda and S.~Mitter, ``The capacity of channels with feedback,''
  \emph{IEEE Transactions on Information Theory}, vol.~55, no.~1, pp. 323--349,
  2009.

\bibitem{chong_isit11_capacity}
C.~Li and N.~Elia, ``The information theoretic characterization of the capacity
  of channels with noisy feedback,'' \emph{IEEE International Symposium on
  Information Theory}, pp. 174--178, 2011.

\bibitem{Permuter11}
H.~H. Permuter, Y.~H. Kim, and T.~Weissman, ``Interpretations of directed
  information in portfolio theory, data compression, and hypothesis testing,''
  \emph{IEEE Transactions on Information Theory}, vol.~57, no.~6, pp.
  3248--3259, 2011.

\bibitem{Kramer_thesis}
G.~Kramer, ``Directed information for channels with feedback,'' \emph{Ph.D.
  dissertation, Swiss Federal Institute of Technology (ETH), Zurich,
  Switzeland}, 1998.

\bibitem{Kung_svd78}
S.~Kung, ``A new identification and model reduction algorithm via singular
  value decompositions,'' \emph{Proc. Twelfth Asilomar Conf. on Circuits,
  Systems and Computers}, pp. 705--714, Nov. 6-8, 1978.

\end{thebibliography}

\newpage

\end{document}